\def\headline#1{\hbox to \hsize{\hrulefill\quad\lower.3em\hbox{#1}\quad\hrulefill}}
\def\hline#1{\hbox to \hsize{\line(1,0){10}\quad \lower.3em\hbox{$#1$}\quad \line(1,0){10}}}
\definecolor{ao(english)}{rgb}{0.0, 0.5, 0.0}
\date{}
\newcommand{\X}{\underline{X}}
\newcommand{\W}{\mathcal{W}}
\newcommand{\abb}[5]{%
\setlength{\arraycolsep}{0.4ex}%
\begin{array}{rcccc}%
#1 &:\,& #2 & \,\,\longrightarrow\,\, & #3 \\[0.5ex]%
     & & #4 & \longmapsto & #5%
\end{array}%
}
\newtheorem{definition}{Definition}
\numberwithin{definition}{section}
\newtheorem{theorem}[definition]{Theorem}
\newtheorem{proposition}[definition]{Proposition}
\newtheorem{lemma}[definition]{Lemma}
\newtheorem{example}[definition]{Example}
\newtheorem{notation}{Notation}[section]
\def\K{\ensuremath{\mathbb{Q}}}
\def\R{\ensuremath{\mathbb{R}}}
\def\Q{\ensuremath{\mathbb{Q}}}
\def\N{\ensuremath{\mathbb{N}}}
\DeclareBoldMathCommand{\a}{a}
\DeclareBoldMathCommand{\c}{c}
\DeclareBoldMathCommand{\e}{e}
\DeclareBoldMathCommand{\f}{F}
\DeclareBoldMathCommand{\g}{g}
\DeclareBoldMathCommand{\h}{h}
\DeclareBoldMathCommand{\x}{z}
\DeclareBoldMathCommand{\z}{z}
\DeclareBoldMathCommand{\v}{v}
\def\softO{\ensuremath{{O}{\,\tilde{ }\,}}}
\DeclareMathOperator{\Comp}{Comp}
\DeclareMathOperator{\comp}{comp}
\DeclareMathOperator{\sign}{sign}
\DeclareMathOperator{\thom}{Thom}
\DeclareMathOperator{\der}{Der}
\def\f{\ensuremath{F}}
\def\scrQ{\ensuremath{\mathscr{Q}}}
\DeclareBoldMathCommand{\p}{p}
\DeclareMathOperator{\RM}{RM}
\DeclareMathOperator{\Com}{Comp}
\DeclareMathOperator{\length}{length}
\DeclareMathOperator{\CompMax}{CompMax}
\title[Connectivity in Symmetric Semi-Algebraic Sets]{Connectivity in Symmetric Semi-Algebraic Sets} 
\thanks{C.\ Riener and R.\ Schabert was supported by the Troms\o{} Research Foundation grant 17MATCR.  T. X. Vu was partially supported by the ANR-FWF grant 10.55776/I6130. 
}
\author[C. Riener, R.  Schabert, T. X. Vu]{Cordian Riener$^{a}$, Robin  Schabert$^{a}$, Thi Xuan Vu$^{a, b}$} 
\affiliation{%
  \institution{$^a$Department of Mathematics and Statistics, UiT The Arctic University of Norway, Tromsø, Norway}
  \institution{$^b$Institute for Algebra, Johannes Kepler University, Linz, A4040, Austria}  
  \state{}
  \postcode{}
  \country{}
}
\begin{document}

\begin{abstract}
A semi-algebraic set is a subset of the real space defined by
polynomial equations and inequalities. In this paper, we consider the 
problem of deciding whether two given points in a semi-algebraic set
are connected. We restrict to the case when all equations and
inequalities are invariant under the action of the symmetric
group and of degree at most $d<n$, where $n$ is the number of variables. Additionally, we assume that the two points are in the same fundamental domain of the action of the symmetric group, by assuming that the coordinates of two given points are sorted in non-decreasing order. We construct and analyze an algorithm that solves this
problem, by taking advantage of the group action, and has a complexity being polynomial in $n$.   
\end{abstract}

\maketitle 
 
\section{Introduction}

\subsection{Motivations}
A motivation for determining the connectivity of semi-algebraic sets
is derived from applications in robot motion planning
\cite{schwartz1983piano}. This is equivalent to ascertaining whether
two corresponding points within the free space belong to the same
connected component of the free space. More recently, practical
applications of roadmap algorithms to kinematic singularity analysis
were reported in Capco et al. \cite{capco2020robots,
  capco2023positive}, highlighting the significance of extending
roadmap algorithms beyond motion planning. Simultaneously, the growing
interest in these algorithms is evident as they are adapted to the
numerical side (see e.g. \cite{iraji2014nuroa}). This underscores the
importance of enhancing roadmap algorithms and the underlying
connectivity results.  

\subsection{Prior results}
\label{subsec:result}
The connectivity problem has received significant attention; however,
to the best of our knowledge, there is no previous work that takes
symmetry into account. 
\paragraph*{Computing roadmaps and deciding the connectivity.}
Pioneering work  of  Schwartz and Sharir \cite{schwartz1983piano}
proposed a solution using  Collins' cylindrical algebraic
decomposition method. The computational  complexity of their approach
is polynomial in the degree $d$ and the number of polynomials $s$, but
it is doubly exponential in the number  $n$ of variables. 

Canny introduced the concept of a roadmap for a semi-algebraic set and
presented an algorithm \cite{Canny}. Roadmaps offer a method for
counting connected components and determining whether two points
belong to the same connected component. Subsequent modifications
\cite{canny1993computing} were made to this algorithm, resulting in
the construction of a roadmap for a semi-algebraic set defined by
polynomials whose sign invariant sets give a stratification of $\R^n$
and whose complexity is $s^n (\log s) d^{O(n^4)}$.  In the case of an
arbitrary semi-algebraic set, he introduced perturbations to the
defining polynomials. This modification enables the algorithm to
determine whether two points belong to the same semi-algebraically
connected component with the same complexity. However, this algorithm
does not provide a path connecting the points. There is also a Monte
Carlo version of this algorithm, which has complexity  $s^n (\log s)
d^{O(n^2)}$.  

Grigor'ev and Vorobjov \cite{grigor1992counting,canny1992finding} proposed an algorithm
with complexity $(sd{})^{n^{O(1)}}$ for counting the number of connected
components of a semi-algebraic set. Additionally, Heintz, Roy, and
Solerno \cite{heintz1994single} and  Gournay and Risler
\cite{gournay1993construction} presented algorithms capable of
computing a roadmap for any semi-algebraic set with the same
complexity. Unlike Canny's algorithm, the complexities of these
algorithms are not separated into a combinatorial part (dependent on
$s$) and an algebraic part (dependent on $d$). Since the given
semi-algebraic set might have $(sd)^n$ different connected components,
Canny's algorithm's combinatorial complexity is nearly optimal. 

In \cite{basu2000computing}, a deterministic algorithm is presented,
constructing a roadmap for any semi-algebraic set within an algebraic
set of dimension $k$ with complexity $s^{k+1}d^{O(k^2)}$. This
algorithm is particularly relevant in robot motion planning, where the
configuration space of a robot is often embedded as a
lower-dimensional algebraic set in a higher-dimensional Euclidean
space. The algorithm's complexity is tied to the dimension of the
algebraic set rather than the ambient space, making it advantageous
for such scenarios. Its combinatorial complexity is nearly optimal,
utilizing only a fixed number of infinitesimal quantities, which
reduces the algebraic complexity to $d^{O(k^2)}$. Additionally, the
algorithm computes a semi-algebraic path between input points if they
lie in the same connected component, addressing the full scope of the
problem. 

None of the mentioned above algorithms has a cost lower than
$d^{O(n^2)}$ and none of them returns a roadmap of degree lower
than $d^{O(n^2)}$. Safey El Din and Schost \cite{safey2011baby} gave a
probabilistic algorithm that extended Canny's original approach to
compute a roadmap of a closed and bounded hypersurface of complexity
$(nd)^{O(n^{1.5})}$.
Later on in \cite{basu2014baby}, the same
authors, together with  Basu and Roy, showed how to obtain a
deterministic algorithm for computing a roadmap of a general real
algebraic set within a cost of $d^{O(n^{1.5})}$.  Basu and Roy in
\cite{basu2014divide}  used a divide-and-conquer strategy to divide
the current dimension by two at every recursive step, leading to a
recursion tree of depth $O(\log(n))$. Their deterministic algorithm
computes a roadmap for a hypersurface in time polynomial in
$n^{n\log^3(n)}d^{n\log^2(n)}$ while the output has size polynomial in
$n^{n\log^2(n)}d^{n\log(n)}$. It is important to note that this algorithm is not polynomial in its output size. 
 Despite
not assuming smoothness on the hypersurface, the algorithm is capable
of handling  systems of equations by taking the sum of the squares of
polynomials.  

Following the  divide-and-conquer strategy, under  smoothness
and compactness assumptions, Safey El Din and Schost
\cite{din2017nearly} introduce a probabilistic roadmap algorithm,
with both output degree and running time  being polynomial in
$(nd)^{n\log(k)}$, where $k$ is the dimension of the considered
algebraic set. Note that the algorithm presented in
\cite{din2017nearly} exhibits a running time that is subquadratic in
the size of the output, and the associated complexity constants in the
exponent are explicitly provided. Recently, in
\cite{prebet2024computing}, Pr{\'e}bet, Safey El Din, and Schost
proved a new connectivity statement which generalizes the  one  in
\cite{din2017nearly}   to the unbounded case; a separate paper
to obtain an algorithm, with the same complexity as
\cite{din2017nearly}, for computing roadmaps is presented in \cite{prebet2024part2}.  

\paragraph*{Systems of symmetric polynomials.} The setup of this paper
will consider the connectivity question in the 
situation when the polynomials used to describe a semi-algebraic set
are symmetric of low degree (relative to the number of variables). It
has been observed by various works that the action of the symmetric
group can be particularly exploited in this situation. 

Timofte
\cite{timofte2003positivity} showed that for symmetric polynomials of
lower-degree, positivity can be inferred from positivity on
low-dimensional test sets, consisting of points with not more than
half-degree distinct coordinates. Building on this work,
\cite{riener2012degree,riener2016symmetric} and
\cite{riener2013exploiting} showed that this approach generally can be
used for algorithmic approaches to polynomial optimization problems,
yielding polynomial complexity. A similar approach was used by Basu
and Riener to compute the equivariant Betti numbers of symmetric
semi-algebraic sets \cite{basu2018equivariant},  the Euler Poincar\'e
characteristic \cite{basu2017efficient}, and the first $\ell$ Betti
numbers \cite{basu2022vandermonde} of such sets in a time which is
polynomial for a fixed degree. Recently, works by Faug{\`e}re, Labahn,
Safey El Din, Schost, and Vu \cite{faugere2020computing}, as well as
Labahn, Riener, Safey El Din, Schost, and Vu \cite{labahn2023faster}
have shown that also in the setup where the degree is apriori not
fixed, drastic reduction in complexity for certain algorithmic
questions in real algebraic geometry is possible.  
\subsection{Our results}
In this work, we demonstrate the feasibility of developing a polynomial time algorithm for determining connectivity within a symmetric semi-algebraic set, characterized by a constant number of symmetric polynomials with fixed degrees. Specifically, we present an algorithm with a computational complexity of $O(n^{d^2})$, designed to ascertain whether two points are equivariantly connected, presupposing that these points reside within the same fundamental domain of action. This assessment is equivalent to determining the connectivity of corresponding orbits in the orbit space. Our approach is informed by insights similar to those discussed in prior studies, particularly the feasibility of constructing an equivariant retraction from a fundamental domain in $S$ to its $d$-dimensional orbit boundary. For any two points $x$ and $y$, this retraction can be algebraically defined (as detailed in Theorem \ref{thm:Wel}), resulting in new points $x'$ and $y'$ situated within the $d$-dimensional boundary. Consequently, the initial question of connectivity is simplified to exploring the connectivity between $x'$ and $y'$ across the $d$-dimensional boundary of $S$ (as elucidated in Theorem \ref{thm:conection}). Given that this orbit boundary can be represented through the union of $d$-dimensional semi-algebraic sets, we devise a graph to encapsulate connectivity considerations (refer to Theorem \ref{thm:graph}). The primary algorithmic expenditure is attributed to the construction of this graph, offering a potentially more advantageous methodology compared to the more direct strategies employed in previous research \cite{basu2018equivariant,basu2017efficient,basu2022vandermonde}, which tend to increase degrees when considering such unions of $d$-dimensional sets.

\section{Preliminaries}\label{sec:prelim}
\subsection{Symmetric polynomials}
Throughout the article, we fix $n,d\in \N$, $d\leq n$ and denote by
$\R[\X] := \R[X_1,\dots,X_n]$ the real polynomial ring in $n$ variables. The \emph{basic closed semi-algebraic set} defined by polynomials $f_1, \dots, f_s$ in $\R[\X]$ is 
\[
\{x \in \R^n \, | \, f_i(x) \ge 0 {\rm \ for \ all} \ i = 1, \dots, s\}.
\] A semi-algebraic set is a set generated by a finite sequence of union, intersection and complement operations on basic semi-algebraic sets.

For $n \in \N_0$, we denote by $S_n$ the symmetric group of order $n!$. The symmetric group $S_n$ has a natural linear action on $\R^n$ by permuting the coordinates, which extends naturally to an action on $\R[\X]$. We say that a polynomial $f\in\R[\X]$ is symmetric if \[f(\X)=f(\sigma^{-1}(\X)) \text{ for all }\sigma\in\ S_n\] and we denote the $\R$-algebra of symmetric polynomials by
$\R[\X]^{S_n}$. For $1\leq i \leq n$ we let 
\[p_i:=X_1^i+\cdots+X_n^i\]
be the \emph{$i$-th power sum}. We further say that a semi-algebraic set $S$ is \emph{symmetric} if it is stable under the permutation action of $S_n$. Notice that clearly, a set described by symmetric polynomials is symmetric, but that also the converse is true, i.e., every symmetric set can be described by symmetric polynomials. Then the fundamental theorem of
symmetric polynomials states that every symmetric polynomial can be
uniquely written in terms of the first $n$ power sums, i.e., for every symmetric polynomial $f$ there is a unique polynomial $g\in\R[z_1,\ldots,z_n]$ such that
\[f=g(p_1,\ldots,p_n).\]Furthermore, due to the uniqueness of the representation it  follows directly that  for a symmetric polynomial of degree $d<n$ this  representation can not contain power sums of degree higher than $d$, i.e., we have: 

\begin{proposition}
Any symmetric polynomial $f\in \R[\X]^{S_n}$ of degree $d\le
n$, can be uniquely written as \[f=g(p_1,\dots,p_d),\]
where $g$ is a polynomial in $\R[Z_1,\dots,Z_d]$. \label{lm:deg_restrict}
\end{proposition}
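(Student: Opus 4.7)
The plan is to use the fundamental theorem of symmetric polynomials together with a weighted-degree bookkeeping argument. The case $d=n$ is already the fundamental theorem, so assume $d<n$.

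First, I invoke the fundamental theorem quoted just above the statement: there exists a unique $g\in\R[Z_1,\dots,Z_n]$ with $f=g(p_1,\dots,p_n)$. The goal is to show that no $Z_j$ with $j>d$ actually occurs in $g$. To track this, I would equip the polynomial ring $\R[Z_1,\dots,Z_n]$ with the weighted grading in which $\deg Z_i = i$. The point of this choice is that $p_i$ is homogeneous of ordinary degree $i$ in $\R[\X]$, so the substitution $Z_i \mapsto p_i$ sends a weighted-homogeneous monomial $Z_1^{\alpha_1}\cdots Z_n^{\alpha_n}$ of weighted degree $\alpha_1+2\alpha_2+\cdots+n\alpha_n$ to a polynomial that is homogeneous of the same ordinary degree in $\R[\X]$.

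Next, I would decompose $g = \sum_{e\ge 0} g_e$ into its weighted-homogeneous components. Substituting power sums yields
\[
f \;=\; \sum_{e\ge 0} g_e(p_1,\dots,p_n),
\]
where each $g_e(p_1,\dots,p_n)$ is ordinary-homogeneous of degree $e$. Since $\deg f \le d$, the homogeneous components of $f$ in degrees strictly greater than $d$ vanish. By the algebraic independence of $p_1,\dots,p_n$ (equivalently, the uniqueness part of the fundamental theorem applied to each homogeneous component), this forces $g_e = 0$ for every $e>d$.

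Finally, I would observe that any monomial $Z_1^{\alpha_1}\cdots Z_n^{\alpha_n}$ with $\alpha_j\ge 1$ for some $j>d$ has weighted degree at least $j>d$, hence cannot appear in any surviving $g_e$. Therefore $g\in\R[Z_1,\dots,Z_d]$, and uniqueness of this restricted $g$ is inherited from uniqueness in the fundamental theorem. The main subtlety is ensuring that we may legitimately conclude each weighted-homogeneous piece of $g$ vanishes from the vanishing of the corresponding ordinary-homogeneous piece of $f$; this is exactly the content of the algebraic independence of the $p_i$, which is why the argument hinges on uniqueness rather than mere existence.
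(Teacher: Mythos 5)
Your proof is correct and follows the same route the paper sketches: the paper asserts the proposition "follows directly" from the uniqueness in the fundamental theorem of symmetric polynomials, and your weighted-grading bookkeeping (with $\deg Z_i = i$) is precisely the standard way to make that degree restriction rigorous. The key step---that $g_e(p_1,\dots,p_d,\dots,p_n)=0$ forces $g_e=0$ by algebraic independence of the power sums---is correctly identified and justified.
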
 
Note that there is no particular preference in using the power sum polynomials $p_i$. Indeed, any family of algebraic independent symmetric polynomials with degree sequence $\{1,\ldots, n\}$ yields the same results.

\subsection{Weyl chambers and the Vandermonde map}
We denote by $\W_c$ the cone defined by $X_1 \leq X_2 \leq \cdots \leq X_n$, the \emph{canonical Weyl chamber} of the action of $S_n$. This is a fundamental domain of this action. The walls of $\W_c$ contain points for which the above set of inequalities is not strict in some places, i.e., points $x=(x_1,\ldots,x_n)\in\W_c$ for which  $x_i=x_{i+1}$ for some $i\in\{1,\ldots,n\}$. More precisely:
\begin{definition}
\label{def:composition-order}
A sequence of positive integers $\lambda=(\lambda_1,\ldots,\lambda_\ell)$ with $|\lambda| := \sum_{i=1}^{\ell} \lambda_i = n$ is called a \emph{composition of $n$ into $\ell$ parts} and we call $\ell$ the \emph{length} of $\lambda$. 
Furthermore, we denote by $\Comp(n)$ the set of compositions of $n$ and by $\Comp(n,\ell)$ the set of compositions of $n$ of length $\ell$.
For $n \in \N$, and $\lambda = (\lambda_1,\ldots,\lambda_\ell) \in \Comp(n)$,
we denote by $\W_c^{\lambda}$ the subset of $\W_c$ defined by,
\begin{multline*}X_1 = \cdots = X_{\lambda_1} \leq X_{\lambda_1+1} = \cdots = X_{\lambda_1+\lambda_2} \\
\leq \cdots \leq X_{\lambda_1+\cdots+\lambda_{\ell-1}+1} = \cdots = X_n.
\end{multline*}
For every $\lambda\in\Com(n, d)$ the set $\W_c^\lambda$ defines a $d$-dimensional face of the cone $\W_c$, and every face is obtained in this way through a composition.
We will denote by $L_\lambda$ the linear span of $\W_c^\lambda$. Note that 
\[
\dim L_\lambda = \dim \W_c^\lambda = \length(\lambda).
\]
Furthermore, for every $d$ we define the set of \emph{alternate odd} compositions $\CompMax(n,d)$ as 
\[\left\{\lambda=(\lambda_1,\ldots,\lambda_d) \in \Comp(n) \; \mid \;
\lambda_{2i+1} =1, 0 \leq i < d/2 \right\}.\]
\end{definition}

Let $n\in \N$, and  $\lambda,\mu \in \Comp(n)$ then we denote $\lambda \prec \mu$, if $\W^{\lambda}_c \supset \W^{\mu}_c$. Equivalently,  $\lambda  \prec  \mu$ if $\mu$ can be obtained from $\lambda$ by replacing some of the commas in $\lambda$ by $+$ signs. One finds that $\prec$ is a partial order on $\Comp(n)$ turning $\Comp(n)$ into a poset.
If $\lambda, \mu \in \Comp(n)$, then the smallest  composition that is bigger than $\lambda$ and $\mu$ is called  the \emph{join} of $\lambda$ and $\mu$. The face corresponding to the join of $\lambda$ and $\mu$ is $\W_c^\lambda\cap\W_c^\mu$.

For $x \in \W_c$, we denote by $\comp(x)$ the largest
      composition $\lambda$ of $n$, such that $x$ can be written as 
    \[ x = (\underbrace{z_1, \dots, z_1}_{\lambda_1\text{-times}},
    \underbrace{z_2, \dots, z_2}_{\lambda_2\text{-times}}, \dots,
    \underbrace{z_\ell, \dots, z_\ell}_{\lambda_\ell\text{-times}}), \] 
    where $z = (z_1, \dots, z_\ell) \in \R^\ell$, and $\ell$ is the length of
    $\lambda$. For a composition $\lambda = (\lambda_1, \dots, \lambda_\ell)
      \in \Comp(n)$ and $f \in \R[\X]$, we define the polynomial
   \[ f^{[\lambda]} := f(\underbrace{X_1, \dots,
      X_1}_{\lambda_1\text{-times}}, \underbrace{X_2, \dots,
      X_2}_{\lambda_2\text{-times}}, \dots, \underbrace{X_\ell, \dots,
      X_\ell}_{\lambda_\ell\text{-times}}). \]

Notice that $f^{[\lambda]}$ is a polynomial in $\ell$ variables and its image is exactly the image of $f$ when restricted to $\W_c^\lambda$.    

\begin{example}
Consider $n = 4$.   There exist eight compositions of $4$ listed as
$(4), (3,1),$ $(1, 3), (2,2), (2,1, 1), (1,2,1), (1,1,2),
(1,1,1,1)$. %Among these, the five partitions  are  $(4), (3,1),
%(2,2), (2,1,1), (1,1,1,1)$.  
Notably, for  $\mu$ in the set  $\{(3,1),
(1,3), (4)\}$, we have  $(1,2,1) \prec \mu$. Furthermore, the join of
$(2,1,1)$ and $(1,1,1,1)$ is $(2,1,1)$. 

Given $x = (-1, 5 , 5, -3)$, we determine  $\comp(x) =
(1,2,1)$. Lastly, considering the polynomial $f = x_3^3 + x_1x_2 -
x_4$ and the composition $\lambda = (1,2,1)$, it follows that
$f^{[\lambda]} = x_1x_2 + x_2^3 - x_3$.   
\end{example} 
    %The permutations of the {\em canonical chamber} $\W_c := \{x\in \R^n
 % ~|~ x_1\leq\dots\leq x_n\}$ are called the \emph{Weyl-chambers}.

Finally, for $d\in\{1,\ldots,n\}$ we denote by
\[\abb{\nu_{n,d}}{\R^n}{\R^n}{x}{(p_1(x),\dots,p_d(x))}\] 
the called  $d$-\emph{Vandermonde map}. In the case when $d=n$ we will just use the term Vandermone map.
For $a=(a_1,\dots,a_d)\in \R^d$, the fiber of the $d$-Vandermonde map, i.e., the set 
\[V(a):= \{ x\in \R^n ~|~ p_1(x)=a_1,\dots,p_d(x) = a_d \} \] 
is called a \emph{Vandermonde variety} with respect to $a$. Notice that here in the case $n=d$ every non-empty Vandermonde variety is precisely the orbit of one point $x\in\R^n$.

The importance of the Vandermonde map was realized by Arnold, Giventhal, and Kostov in their works on hyperbolic polynomials. We note the main properties needed for our work in the following.

\begin{theorem}[\cite{arnold1986hyperbolic, givental1987moments,
    kostov1989geometric}] 
For $a\in \R^d$ and a Weyl-chamber $\W$, the Vandermonde map, when
restricted to $\W$, is a homeomorphism onto its image. Moreover, the
Vandermonde   variety $V(a)\cap \W$ is either contractible or empty. 
\end{theorem}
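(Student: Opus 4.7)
My plan is to prove the two assertions separately, as they rely on rather different ideas.

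\emph{Homeomorphism onto the image.} The first statement concerns $\nu_{n,n}$, since the restriction of $\nu_{n,d}$ to an $n$-dimensional chamber cannot be injective when $d<n$. Injectivity of $\nu_{n,n}|_{\W}$ follows from the fundamental theorem of symmetric polynomials: two points $x,y\in\R^n$ with $\nu_{n,n}(x)=\nu_{n,n}(y)$ agree on every symmetric polynomial and hence lie in the same $S_n$-orbit; since $\W$ is a fundamental domain of the $S_n$-action, this forces $x=y$. Continuity is automatic because each component is polynomial. To upgrade a continuous bijection to a homeomorphism, I would invoke properness: $p_2(x)=\|x\|^2$ is coercive on $\W$, so $\nu_{n,n}|_{\W}$ is a proper map between locally compact Hausdorff spaces and therefore closed, making its inverse continuous.

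\emph{Contractibility of $V(a)\cap\W$.} This is the substantive part, due to Kostov. Via Newton's identities, fixing $(p_1,\ldots,p_d)=a$ is equivalent to fixing the first $d$ elementary symmetric polynomials $e_1(x),\ldots,e_d(x)$. Thus $V(a)\cap\W$ can be identified with the set of monic hyperbolic polynomials $P(t)=\prod_{i=1}^{n}(t-x_i)$ whose top $d$ coefficients are prescribed, the remaining coefficients $e_{d+1},\ldots,e_n$ serving as free parameters. My plan is to exhibit a deformation retraction of this slice onto a single polynomial, via a gradient-like flow that moves the $n-d$ free coefficients toward a chosen target while remaining inside the hyperbolic locus. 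Equivalently, in a Morse-theoretic formulation, I would pick a strictly convex, coercive symmetric function (for instance $p_{2k}$ with $k$ sufficiently large) and show that its restriction to $V(a)\cap\W$ has a unique critical point, necessarily a global minimum; standard Morse theory on manifolds with corners then yields contractibility.

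\emph{Main obstacle.} The subtle step is the second part: one must verify that the flow stays inside the hyperbolic locus, i.e., that roots remain real throughout the deformation and that one never crosses the discriminant hypersurface into the complex locus. Equivalently, one has to control how roots coalesce and split across the stratification of $\W$ by the faces $\W_c^\lambda$ indexed by compositions $\lambda$. This combinatorial-analytic interplay between hyperbolicity and the Weyl-chamber boundary is the heart of the Arnold--Givental--Kostov theorem.
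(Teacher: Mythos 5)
The paper offers no proof of this theorem; it is quoted from Arnold, Givental and Kostov, so there is nothing internal to compare against. Judged on its own merits, your first part is complete and correct: injectivity of $\nu_{n,n}|_{\W}$ from the fundamental theorem of symmetric polynomials (equal power sums force equal elementary symmetric polynomials, hence equal multisets of coordinates, hence equality in the fundamental domain), and the upgrade to a homeomorphism via properness of $p_2$ is a clean, standard argument. You are also right that the homeomorphism assertion can only concern the case $d=n$, consistent with the paper's naming convention.

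The second part, however, is a plan rather than a proof, and the step you defer is the entire content of the theorem. Two concrete problems. First, your proposed Morse function: $p_{2k}$ is convex on $\R^n$, but $V(a)\cap\W$ is not convex, so convexity of the objective gives you nothing about uniqueness of critical points of its restriction; establishing that uniqueness (equivalently, that your retraction flow never leaves the hyperbolic locus) is precisely the Givental--Kostov argument, and you acknowledge you have not supplied it. Second, even granting a unique critical point, "standard Morse theory" does not apply off the shelf: $V(a)$ may be singular, and $V(a)\cap\W$ is a stratified semi-algebraic set whose boundary strata $\W^\lambda_c$ carry the critical points (this is the content of Theorem \ref{thm:arnold}), so one needs stratified Morse theory or the inductive argument of the original papers, which proceeds by induction on $n-d$ using the location of the critical points of $p_{d+1}$ on $V(a)\cap\W$. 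As written, your proposal correctly identifies where the difficulty lies but does not close it, so the contractibility claim remains unproven.
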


\subsection{Encoding of real algebraic points}
Our algorithm manipulates points obtained through subroutines, with
the outputs having coordinates represented as real algebraic
numbers. We encode such a point using univariate polynomials and a
Thom encoding. 

Let $\mathcal{F} \subseteq \R[\X]$ and $x \in \R^n$. A mapping $\zeta:
\mathcal{F} \rightarrow \{-1, 0, 1\}$ is called a {\em sign condition}
on $\mathcal{F}$. The {\em sign condition realized by} $\mathcal{F}$
on $x$ is defined as $\sign(\mathcal{F}, x): \mathcal{F} \rightarrow
\{-1, 0, 1\}$, where $f \mapsto \sign(f(x))$. We say $\mathcal{F}$
{\em realizes} $\zeta \in \{-1, 0, 1\}^{\mathcal{F}}$ if and only if
$\sign(\mathcal{F}, x) = \zeta$.

A {\em real univariate representation} representing $x \in \R^n$
consists of: 
\begin{itemize}
\item  a zero-dimensional parametrization $$\scrQ = (q(T), q_0(T),
  q_1(T), \dots, q_n(T)),$$ where $q, q_0, 
  \dots, q_n$ lie in $\Q[T]$ with gcd($q, q_0$) = 1, and  
\item a Thom encoding $\zeta$ representing an element $\vartheta \in
  \R$ such that  
\[ 
q(\vartheta) = 0 \quad \text{and} \quad x =
\left(\frac{q_1(\vartheta)}{q_0(\vartheta)}, \dots,
  \frac{q_n(\vartheta)}{q_0(\vartheta)}\right) \in \R^n. 
\] 
\end{itemize}
Let $\der(q) := \{q, q^{(1)}, q^{(2)}, \dots, q^{(\deg(q))}\}$ denote
a list of polynomials, where $q^{(i)}$, for $i > 0$, is the formal
$i$-th order derivative of $q$. A mapping $\zeta$ is called a {\em
  Thom encoding} of $\vartheta$ if $\der(q)$ realizes $\zeta$ on $x$
and $\zeta(q) = 0$. Note that distinct roots of $q$ in $\R$ correspond
to distinct Thom encodings \cite[Proposition~2.28]{BPR06}. We refer
readers to \cite[Chapters 2 and 12]{BPR06} for details about Thom
encodings and univariate representations.

\begin{example}
A real univariate representation  representing $x = (1/2,
3/2-\sqrt{2}/4)$ is $(\scrQ, \zeta)$, where 
\[
\scrQ = (q, q_0, q_1, q_2) = (T^2-2, 2T, T, 3T-1)
\]
and $\zeta = (1, 1)$ $($equivalently, $\zeta = (+, +) = (q' > 0 \wedge
q'' > 0)$$)$. 

Indeed, let $\vartheta = \sqrt{2}$ be a root of $q$. Then $x =
\left(\frac{q_1({\vartheta })}{q_0({\vartheta})},
  \frac{q_2({\vartheta })}{q_0({\vartheta })} \right)$. Moreover,
$q'(T) = 2T$ and $q''(T) = 2$, implying that $q'(\vartheta) > 0$ and
$q''(\vartheta) > 0$. Thus, $\zeta = (+, +)$ serves as a Thom encoding 
of $\vartheta$.   
\end{example}

Given a non-zero polynomial $q \in \R[T]$, we consider the routine
${\sf ThomEncoding}(q)$, which returns the ordered list of Thom encodings
of the roots of $q$ in $\R$. This routine can be executed using
\cite[Algorithm 10.14]{BPR06} with a complexity of $O(\delta^4
\log(\delta))$, where $\delta = \deg(q)$, involving arithmetic
operations in $\Q$.  

Let $q \in \R[T]$ be a polynomial of degree $\delta_q$ and $p \in
\R[T]$ be a polynomial of degree $\delta_p$. Additionally, let
$\thom(q)$ be the list of Thom encodings of the set of roots of $q$ in
$\R$. We denote by ${\sf Sign\_ThomEncoding}(q, p, \thom(q))$ the
routine, which, for every $\zeta \in$ Thom$(q)$ specifying the root
$\vartheta$ of $q$, returns the sign of $p(\vartheta)$. This routine
can be implemented using \cite[Algorithm 10.15]{BPR06} with a
complexity of $O(\delta_q^2(\delta_q\log(\delta_q) + \delta_p))$
arithmetic operations in $\Q$.

\subsection{Roadmaps and connectivity}

In general, a roadmap for a semi-algebraic set is a curve that has a
non-empty and connected intersection with all of its connected
components. 

Formally, let $S \subset \R^n$ be a semi-algebraic set. We denote by
$\pi$ the projection onto the $X_1$-axis, and for $x \in \R$, we set
$S_x = \{y \in \R^{n-1} : (x,y) \in S\}$. A roadmap for $S$ is a
semi-algebraic set $\RM(S)$ of dimension at most one contained in $S$,
which satisfies the following conditions: 
\begin{itemize}
\item[{\sf RM}$_1$.] For every semi-algebraic connected component $C$
  of $S$, $C \cap \RM(S)$ is a semi-algebraic connected set. 
\item[{\sf RM}$_2$.] For every $x \in \R$ and for every semi-algebraic
connected component $C'$ of $S_x$, $C' \cap \RM(S) \ne \emptyset$.
 \end{itemize} 
Let $\mathcal{M} \subset \R^n$ be a finite set of points. A roadmap
for $(S, \mathcal{M})$ is a semi-algebraic set $\RM(S, \mathcal{M})$
such that $\RM(S,\mathcal{M})$ is a roadmap of $S$, and $\mathcal{M}
\subset \RM(S,\mathcal{M})$. Roadmaps can be used to decide the
connectivity of semi-algebraic sets. We summarize below the results we
use in this paper.

\begin{theorem}[{\cite[Theorem 3]{basu2000computing}} and {\cite[Theorem 16.14]{BPR06}}]
\label{them:normal}
  Let $F$ be a sequence of polynomial in $\Q[\X]$ with the algebraic
  set $Z\subset \R^n$ defined by $F$ of dimension $k$. Consider $G =
  (g_1, \dots, g_s) \subset \K[\X]$. Let $S$ be a semi-algebraic subset of $Z$
  defined by   $G$. Let $d$ be a bound for the degrees of $F$ and
  $G$. Consider a finite set of points $\mathcal{M}
  \subset Z$, with cardinality $\delta$, and  described
  by real univariate representations of degree at most
  $d^{O(n)}$.  Then the following holds. 
\begin{itemize} 
 \item  There is an algorithm whose output is exactly one 
  point in every semi-algebraically connected component of $S$. The
  complexity of the algorithm is $s^{k+1}d^{O(n^2)}$  arithmetic
  operations in $\Q$.
\item  There
  exists an algorithm to  compute a roadmap for $(Z,
  \mathcal{M})$ using $\delta^{O(1)} d^{O(n^{1.5})}$ arithmetic
  operations in $\Q$.
\end{itemize}
 As a consequence, there exists an algorithm for deciding whether two
  given points,  described by real univariate representations of
  degree at most   $d^{O(n)}$, belong to the same connected component
  of $Z$ by using $s^{k+1}d^{O(n^2)}$  arithmetic operations in $\Q$.   
\end{theorem}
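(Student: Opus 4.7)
Since the statement collects three algorithmic results that are already established in the literature, the plan is to invoke the relevant references and to explain how the third item (the consequence) follows from the first.

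For the first bullet, I would cite \cite[Theorem 3]{basu2000computing}, whose main procedure constructs a roadmap of $S$ inside the $k$-dimensional algebraic set $Z$ using $s^{k+1}d^{O(n^2)}$ arithmetic operations in $\Q$. Extracting one representative per semi-algebraically connected component of $S$ from the combinatorial description of this roadmap is a standard graph-traversal postprocessing step whose cost is absorbed by the construction. For the second bullet, I would invoke \cite[Theorem 16.14]{BPR06}, which is the culmination of the baby-step/giant-step roadmap constructions of \cite{safey2011baby,basu2014baby}: given $\mathcal{M}$ as a set of $\delta$ algebraic points encoded by real univariate representations of degree $d^{O(n)}$, the algorithm returns a roadmap of $Z$ containing $\mathcal{M}$ in $\delta^{O(1)}d^{O(n^{1.5})}$ operations.

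For the consequence, I would apply the algorithm underlying the first bullet to $Z$ itself, while forcing the two query points $x,y$ to appear as distinguished vertices of the roadmap. This is permitted by the construction of \cite{basu2000computing}, where the set of marked points enters the cost only polynomially in its cardinality. The resulting roadmap $R$ has dimension at most one, and by property {\sf RM}$_1$ it meets every connected component of $Z$ in a nonempty connected set; consequently $x$ and $y$ lie in the same connected component of $Z$ if and only if they lie in the same connected component of $R$, which is decided by a graph-connectivity query on the combinatorial adjacency structure of $R$ at negligible cost. The dominant cost remains $s^{k+1}d^{O(n^2)}$.

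The only point requiring care — not a genuine obstacle, given the black-box nature of the invoked results — is verifying that feeding $x$ and $y$ as marked points does not inflate the announced complexities. This is immediate from \cite{basu2000computing,BPR06}, where the marked-point input enters the cost polynomially in its cardinality and in the degree of its real univariate representation, both of which are already $d^{O(n)}$ here, so the claimed bounds are preserved.
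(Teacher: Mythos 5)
This theorem is imported from the literature and the paper offers no proof of its own beyond the citations to \cite{basu2000computing} and \cite{BPR06}; your proposal does the same, and your derivation of the final consequence (running the roadmap construction with the two query points as marked points and then answering a graph-connectivity query on the resulting one-dimensional roadmap, using property {\sf RM}$_1$) is the standard and correct argument, with the complexity unaffected since the marked points enter the cost only polynomially. So the proposal is correct and takes essentially the same approach as the paper.
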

Note also that there are some other prior results concerning the
problem of computing roadmaps and deciding the connectivity of two
given points (see Subsection~\ref{subsec:result}), with better
complexities. However, the result stated in Theorem~\ref{them:normal}
is sufficient for our main algorithm in Section~\ref{sec:algo}. In the following sections, we use the notation ${\sf Normal\_Connect}(S, x, y)$ to refer to the algorithm described in this theorem. Given a semi-algebraic set $S$ and two real points $x$ and $y$, the algorithm returns {\bf true} if $x$ and $y$ belong to the same connected component of S; otherwise, it returns {\bf false}. 

\subsection{Graphs and connectivity of unions}
Our algorithm will relate the computation of connected components in unions of sets.

\begin{definition}
Let $G=(V,E)$ be a graph. Given two vertices $u,v\in V$  we say that they are \emph{connected} if there exists a sequence of vertices \(v_1, v_2, \ldots, v_k\) where \(v_1 = u\), \(v_k = v\), and \((v_i, v_{i+1}) \in E\) for all \(1 \leq i < k\). Moreover, a \emph{connected component} of  \(G\) is a maximal set of vertices \(C \subseteq V(G)\) such that for every pair of vertices \(v, u\) in \(C\), there is a path in \(G\) between \(v\) and \(u\). 
\end{definition}
Our main need for this concept is related to the following. Let $S_1,\ldots,S_k\subset\R^n$ be a collection of semi-algebraic sets. We are interested in understanding the connectivity of the union $\bigcup_{i=1}^k S_i$ and will use the following construction of a bipartite graph.

\begin{algorithm}
\caption{Construct Bipartite Graph from Semi-Algebraic Sets}\label{alog:G}
\begin{itemize}
    \item[{\bf In:}] a collection of semi-algebraic sets $S_1, \ldots, S_k$
\end{itemize}
\begin{itemize}
    \item[{\bf Out:}] a bipartite graph $G(A \cup B, E)$ where $A$ and $B$ are vertex sets representing connected components of the sets, and $E$ is a set of edges indicating connectivity between these components
\end{itemize}
\vspace{-0.1cm}
 \noindent\rule{8.5cm}{0.5pt}
 
\begin{enumerate}
    \item initialize $A, B, E$ as empty sets
    \item {\bf for} {each pair of indices $i, j$ with $1 \leq i, j \leq k$ and $i \neq j$} {\bf do}
    \begin{enumerate}
        \item     
        compute one point per connected component of  $S_{(i,j)} = S_i \setminus S_j$ and $S_{(j,i)} = S_j \setminus S_i$. Add the points together with the ordered pairs ${(i,j)}$ and ${(j,i})$ into $A$
        \item compute one point per connected component of $S^{i,j} = S_i \cap S_j$ and add each point to $B$ together with the tuple $(i,j)$
    \end{enumerate}
    \item {\bf for} each $u \in A$ {\bf do}
    \begin{enumerate}
        \item {\bf for} each $w \in B$ {\bf do}
        \begin{enumerate}
            \item  determine the semi-algebraic set $S_i$ to which both $u$ and $w$ belong by inspecting the associated ordered pair {$(i,j)$} and the associated tuple $\{u,w\}$
            \item {\bf if} {$u,w$ are in the same set $S_i$} {\bf then}
            \begin{enumerate}
                \item {\bf if} {\sc connect}({$u, w, S_i$}) {\bf then} add edge $(u, w)$ to $E$
            \end{enumerate}
        \end{enumerate}
    \end{enumerate}
    \item \textbf{return} $G(A \cup B, E)$
\end{enumerate}
\label{alg:ConstructGraph}
\end{algorithm}

This information on the pairs of connected components and their intersection contains enough information to deduce the connectivity of the union. Indeed, this can be seen directly via the usual Mayer-Vietoris spectral sequence, whose $(0.0)$-term of the spectral sequence stabilizes at the second page. From this sequence it follows that the
$0$-cohomology of the union $\bigcup_i S_i$  is isomorphic to the kernel of
\[
\bigoplus_i \mathbb{H}^0(S_i) \rightarrow \bigoplus_{i,j} \mathbb{H}^0(S_{ij}),\]
where the map is the usual generalized restriction map.
This observation together with the notion of connectivity in graphs directly yields.
\begin{theorem}\label{thm:graph}
Let $S_1,\ldots, S_k\subset\R^n$ be a finite collection of semi-algebraic sets each of dimension $n$ and described by polynomials of degree at most $d$. The connected components of  $S:=\bigcup_{i=1}^k S_i$ are in $1:1$ correspondence with the connected components of the graph $G:=G(A \cup B, E)$ as constructed in Algorithm \ref{alg:ConstructGraph}.  
\end{theorem}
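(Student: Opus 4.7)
My plan is to define a natural map $\phi\colon V(G)\to\pi_0(S)$ that sends each vertex, regarded as a point in $S$, to the connected component of $S$ containing it, and then to show that $\phi$ descends to a bijection $\bar\phi\colon\pi_0(G)\to\pi_0(S)$. First, $\phi$ is constant on every edge $(u,w)\in E$: by Algorithm~\ref{alg:ConstructGraph} such an edge is only created when $u$ and $w$ lie in a common $S_i$ and are in the same connected component of $S_i\subseteq S$, so they lie in the same connected component of $S$. This gives the induced map $\bar\phi$ on $\pi_0(G)$, and the theorem reduces to showing that $\bar\phi$ is bijective.

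Surjectivity of $\bar\phi$ is the easy direction. Given $C\in\pi_0(S)$ and $p\in C$, I have $p\in S_i$ for some $i$, and picking any $j\neq i$ (the case $k=1$ being trivial), $p$ lies either in $S_i\setminus S_j$ or in $S_i\cap S_j$. The algorithm produces a representative vertex for the connected component of $p$ in the corresponding set, and since connected components of $S_i\setminus S_j$ or $S_i\cap S_j$ are connected subsets of $S$, this representative is contained in $C$.

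Injectivity is the main obstacle. Given $v_1,v_2\in V(G)$ with $\phi(v_1)=\phi(v_2)=C$, I use semi-algebraic path-connectedness of $C$ to choose a semi-algebraic path $\gamma\colon[0,1]\to C$ joining them. Each preimage $\gamma^{-1}(S_i)$ is semi-algebraic, hence a finite union of intervals, and refining these yields a partition $0=t_0<t_1<\cdots<t_m=1$ together with indices $i_\ell$ such that $\gamma([t_\ell,t_{\ell+1}])\subseteq S_{i_\ell}$ and each interior transition point $\gamma(t_\ell)$ lies in $S_{i_{\ell-1}}\cap S_{i_\ell}$. At each such transition I pick the $B$-vertex $w_\ell$ representing the connected component of $\gamma(t_\ell)$ in $S_{i_{\ell-1}}\cap S_{i_\ell}$, and I aim to show that the sequence $v_1,w_1,\ldots,w_{m-1},v_2$ can be realized as a graph path in $G$.

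The bipartite structure of $G$ is what makes this delicate: between consecutive $B$-vertices $w_\ell$ and $w_{\ell+1}$, both lying in $S_{i_\ell}$, I must exhibit a bridging $A$-vertex $u$ that represents a connected component of some $S_{i_\ell}\setminus S_j$ contained in the same connected component of $S_{i_\ell}$ as the segment $\gamma([t_\ell,t_{\ell+1}])$. This is where the hypothesis $\dim S_i=n$ enters the argument: together with a semi-algebraic cell-decomposition of the segment, it guarantees that $\gamma([t_\ell,t_{\ell+1}])$ cannot be entirely absorbed into other $S_j$, so a suitable index $j$ and a bridging vertex $u$ always exist. Granted such a $u$, the calls \textsc{connect}$(u,w_\ell,S_{i_\ell})$ and \textsc{connect}$(u,w_{\ell+1},S_{i_\ell})$ both return \textbf{true}, so the edges $(u,w_\ell),(u,w_{\ell+1})\in E$ provide the required link. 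Handling the extreme segments $\gamma([0,t_1])$ and $\gamma([t_{m-1},1])$ analogously assembles a complete path from $v_1$ to $v_2$ in $G$, establishing injectivity and hence the claimed $1{:}1$ correspondence.
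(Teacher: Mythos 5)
Your overall architecture is sound and in fact more careful than the paper's own argument: you define the map $\phi\colon V(G)\to\pi_0(S)$, check it is constant on edges, and split the claim into surjectivity (easy, and your argument for it is fine) and the hard direction, which you correctly identify as showing that any two vertices lying in the same component of $S$ are joined by a path in $G$. The path-subdivision into segments $\gamma([t_\ell,t_{\ell+1}])\subseteq S_{i_\ell}$ with transition points in $S_{i_{\ell-1}}\cap S_{i_\ell}$ is the right skeleton. The paper's proof, by contrast, essentially restates the construction for both directions (supplemented by a Mayer--Vietoris remark before the theorem statement), so on structure you are ahead.

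However, the step you yourself flag as delicate is a genuine gap, and the justification you offer does not close it. You need, between consecutive $B$-vertices $w_\ell,w_{\ell+1}$ lying in a common connected component $D$ of $S_{i_\ell}$, an $A$-vertex $u$ in $D$ representing a component of $S_{i_\ell}\setminus S_j$ for some $j$; such a $u$ exists only if $D\not\subseteq\bigcap_{j\neq i_\ell}S_j$. The hypothesis $\dim S_i=n$ does not guarantee this: take $S_1\subsetneq S_2$ both of full dimension, or more drastically $S_1=S_2$ connected, in which case $S_1\setminus S_2=S_2\setminus S_1=\emptyset$, $A=\emptyset$, and the $B$-vertices are isolated, so no bridging vertex exists at all (indeed this shows the theorem as stated needs an extra nondegeneracy hypothesis, which neither you nor the paper supplies). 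Your claim that "$\gamma([t_\ell,t_{\ell+1}])$ cannot be entirely absorbed into other $S_j$" is also not what is needed — the segment is one-dimensional and absorbing it is irrelevant; what matters is whether the entire component $D$ of $S_{i_\ell}$ is absorbed. To repair the argument you would either have to add a hypothesis ruling out $D\subseteq\bigcap_{j\neq i_\ell}S_j$, or show that in that degenerate case the vertices $w_\ell,w_{\ell+1}$ can be joined through a different index $j'$ for which $D$ does meet $S_{j'}\setminus S_{j''}$ — neither of which is automatic. A secondary loose end: your endpoints $v_1,v_2$ may themselves be $B$-vertices, and splicing them into the alternating $A$--$B$ path requires the same bridging device, hence inherits the same gap.
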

\begin{proof}
The vertex set $A$ represents points from the differences $S_i \setminus S_j$, $B$ represents points from the intersections $S_i \cap S_j$, and $E$ represents the connectivity between these points in $A$ and $B$. To show that we have an injective pairing:  assume two distinct connected components in $S$, say $C_1$ and $C_2$, map to the same connected component in $G$. By construction, vertices in $A$ and $B$ correspond to unique connected components of $S_i \setminus S_j$ and $S_i \cap S_j$, respectively. Since $C_1$ and $C_2$ are distinct in $S$, they must yield distinct sets of points in $A$ and $B$, contradicting the assumption that they map to the same component in $G$. Hence, the mapping is injective. To show the surjectivity of the pairing:  consider a connected component in $G$, represented by a subset of vertices in $A \cup B$ and edges in $E$. By construction, each vertex in $A$ and $B$ corresponds to a connected component in the differences $S_i \setminus S_j$ or intersections $S_i \cap S_j$ of the semi-algebraic sets. Since edges in $E$ represent connectivity between these components, the corresponding points in $S$ form a connected subset. This subset corresponds to a connected component in $S$, ensuring that every component in $G$ maps back to a component in $S$, establishing surjectivity.
\end{proof}

\begin{theorem}[{\cite[Chapter 8]{jungnickel2005graphs}}]
Let \(G = (V, E)\) be an undirected graph. There exists an algorithm that computes all connected components of \(G\) with complexity \(O(|V| + |E|)\). Moreover the  space complexity is \(O(|V| + |E|)\).
\end{theorem}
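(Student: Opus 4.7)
The plan is to use a standard graph traversal argument (either breadth-first or depth-first search). First I would represent $G$ in an adjacency-list format, so that each vertex stores a pointer to the list of its incident edges; the total storage for this representation is $O(|V|+|E|)$, matching the claimed space bound, and building it from any reasonable input format can be done within the same asymptotic cost. I would also maintain one Boolean flag \texttt{visited}$[v]$ for each vertex $v \in V$, initialized to \textbf{false}, and a component-label array assigning each vertex an integer identifying its component.

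Next I would iterate through the vertices in $V$ in any order. Whenever I encounter a vertex $v$ with \texttt{visited}$[v] = \textbf{false}$, I would increment a counter for the current component and launch a BFS (or DFS) from $v$ that repeatedly pops a vertex $u$ from a queue (or stack), marks each unvisited neighbor of $u$ as visited, assigns it the current component label, and pushes it onto the queue. When the queue empties, the connected component containing $v$ has been fully discovered; the outer loop then moves on to the next unvisited vertex to find the next component.

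For correctness, I would argue that the set of vertices reached by a traversal started at $v$ is exactly the connected component of $v$: inductively every vertex enqueued is reachable from $v$, and conversely any vertex on a path from $v$ is eventually enqueued because the traversal only terminates when no neighbor of a visited vertex is unvisited, which would contradict the existence of an unvisited vertex on such a path. Thus the procedure partitions $V$ exactly into connected components.

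For the complexity bound, I would observe that each vertex is placed into the queue at most once (because it is marked visited upon insertion), so the outer loop and the traversal together process each vertex $O(1)$ times and each edge $O(1)$ times (once from each endpoint in an undirected adjacency list), giving $O(|V|+|E|)$ total time. The auxiliary structures (visited flags, component labels, queue/stack) each use $O(|V|)$ space, and the adjacency list uses $O(|V|+|E|)$ space, so the space bound follows. No step here is genuinely difficult; the main subtlety, if any, is simply verifying that the adjacency-list model and the $O(1)$-per-operation queue justify the claimed linear bound in the chosen machine model, which is standard.
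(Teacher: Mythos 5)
Your proof is correct and is exactly the standard BFS/DFS traversal argument that the paper's cited reference (Jungnickel, Chapter 8) uses; the paper itself gives no proof, relying entirely on that citation. Nothing further is needed.
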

In combination with Theorem \ref{thm:graph} we thus have a possibility of computing the connected components of a union.

%%%%%%%%%%%%%%%%%%%%%%%%%%%%%%%%%%%%%%
%%%%%%%%%%%%%%%%%%%%%%%%%%%%%%%%%%%%%%
%%%%%%%%%%%%%%%%%%%%%%%%%%%%%%%%%%%%%%

\section{Connectivity for symmetric semi-algebraic sets}
\subsection{Restriction to subspaces}
Our approach rests on the following characterization of Arnold.

\begin{theorem}[{\cite[Theorem 7]{arnold1986hyperbolic}}]  \label{thm:arnold} 
\label{itemlabel:thm:arnold:b}
Let $a\in\R^d$, and suppose that the $d$-Vandermonde variety $V(a)\subset \R^n$ is non-singular. Then a point $x \in V(a)\cap \W_c$ is a maximizer of $p_{d+1}$ if and only if $x \in \W^\lambda_c$ for some $\lambda \in \CompMax(n,d)$.
\end{theorem}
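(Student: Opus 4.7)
The plan is to combine a Lagrange-multiplier argument (which constrains the critical points to sit on low-dimensional faces of $\W_c$) with a local second-variation argument based on splitting a repeated coordinate (which singles out the alternate-odd multiplicity pattern). Existence of a maximum: for $d\geq 2$ the equation $p_2=a_2$ already bounds the coordinates, so $V(a)\cap\W_c$ is compact and a maximum exists; for $d=1$, $n\geq 2$ the set $\CompMax(n,d)$ is empty and $p_2$ is unbounded on $V(a)\cap\W_c$, so the statement holds vacuously.

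Step 1 (Lagrange). Since $V(a)$ is non-singular, at any critical point $x$ of $p_{d+1}$ on $V(a)\cap\W_c$ which lies in the relative interior of a face $\W_c^\lambda$, one has multipliers $\mu_1,\dots,\mu_d\in\R$ with
\[
(d+1)\,x_j^{\,d}=\sum_{i=1}^{d} i\,\mu_i\, x_j^{\,i-1}\quad\text{for every }j=1,\dots,n.
\]
Thus every coordinate $x_j$ is a root of the degree-$d$ polynomial $Q(T)=(d+1)T^d-\sum_{i=1}^d i\mu_i T^{i-1}$, so $x$ takes at most $d$ distinct values. In $\W_c$ this means $x\in\W_c^\mu$ for some composition $\mu$ of length at most $d$. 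A rank argument on the constraint Jacobian shows that faces of length strictly less than $d$ cannot contain a non-singular point of $V(a)$, so the maximum lies on some length-$d$ face $\W_c^\lambda$.

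Step 2 (Second variation / splitting). The hard part is to show that this length-$d$ composition $\lambda$ must in fact lie in $\CompMax(n,d)$. The idea is to perturb $x$ within $V(a)\cap\W_c$ by splitting a block of size $\lambda_j\geq 2$ into two nearby distinct values, while compensating in the neighbouring blocks so as to preserve $p_1,\dots,p_d$ to the required order. On the refined face $\W_c^{\lambda'}$ of dimension $d+1$ the constraint Jacobian has full rank $d$, so such a one-parameter family exists. One then expands $p_{d+1}$ along this family: the second-order contribution is a quadratic form whose sign is controlled by a Vandermonde-type determinant in the distinct values $z_1<\dots<z_d$. The alternate-odd condition $\lambda_{2i+1}=1$ appears precisely as the combinatorial obstruction that prevents producing a splitting with strictly positive second variation; conversely, whenever some $\lambda_{2i+1}\geq 2$, the interleaving sign pattern of the $z_j$'s forces at least one admissible splitting to increase $p_{d+1}$, contradicting maximality.

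The converse direction then follows from contractibility of $V(a)\cap\W_c$ (Theorem from the excerpt): together with the second-variation computation it implies that the critical points on faces $\W_c^\lambda$ with $\lambda\in\CompMax(n,d)$ are local maxima, and on a contractible domain a local maximum of a continuous function is global. The decisive obstacle is Step 2: the Lagrange computation is routine, but extracting the exact sign of the second variation, and matching the combinatorial condition $\lambda_{2i+1}=1$ to the vanishing of a specific Vandermonde minor, requires the careful sign-tracking that constitutes the core of Arnold's original argument.
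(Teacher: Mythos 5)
First, a point of orientation: the paper does not prove this theorem at all --- it is imported verbatim as Theorem~7 of Arnold's 1986 paper on hyperbolic polynomials (with the companion results of Givental and Kostov), so there is no in-paper argument to compare yours against. Judged on its own, your sketch reproduces the correct architecture of Arnold's proof: compactness of $V(a)\cap\W_c$ from the constraint $p_2=a_2$, the Lagrange computation showing every coordinate of a critical point is a root of a degree-$d$ polynomial and hence that there are at most $d$ distinct values, the rank argument that a point with fewer than $d$ distinct coordinates is a singular point of $V(a)$ (the Jacobian of $(p_1,\dots,p_d)$ has at most as many distinct columns as distinct coordinates), and a second-variation analysis of block-splittings whose signs are governed by Vandermonde-type minors. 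Those first steps are correct; one should only add the remark that a maximizer on $V(a)\cap\W_c$ is automatically a local maximizer on all of $V(a)$ (by symmetry of $V(a)$ and $p_{d+1}$), which is why the pure equality-constrained Lagrange condition applies even on the walls of the chamber.

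There remain, however, two genuine gaps. The first you concede yourself: Step~2 --- matching the sign of the second variation of $p_{d+1}$ along an admissible splitting of the $j$-th block to the parity of $j$, so that maximality forces $\lambda_1=\lambda_3=\cdots=1$ --- is the entire content of the theorem, and your write-up only asserts that ``the interleaving sign pattern forces at least one admissible splitting to increase $p_{d+1}$'' without exhibiting the splitting or computing the sign. As written this is a restatement of the goal, not a proof. The second gap is in the converse direction: the claim that on a contractible domain a local maximum of a continuous function is global is false ($x^3-3x$ on $\R$ is a counterexample). What the Arnold--Givental--Kostov results actually provide is the stronger fact that \emph{every} superlevel set $\{p_{d+1}\ge c\}\cap V(a)\cap\W_c$ is contractible or empty; from this one deduces that a strict local maximum must be the unique global one (otherwise a superlevel set just below its critical value would be disconnected). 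You need to invoke that stronger statement, not the bare contractibility of $V(a)\cap\W_c$ quoted in the paper.
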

\begin{definition}\label{def:Skd}
Given $n,d \in \N$, 
we denote  
\begin{eqnarray*}
\W_{c}^d &=& \bigcup_{\lambda \in \Comp(k,d)} \W_c^\lambda.
\end{eqnarray*} 
Furthermore, for  a semi-algebraic subset $S \subset \R^n$, we denote

\begin{equation*}
\label{eqn:def:Skd}
S_{n,d} = S \cap \W_{c}^d,
\end{equation*}
which we call the \emph{$d$-dimensional orbit boundary of $S$}.
Notice that if $d \geq n$, then $S_{n,d} = S \cap \W_{c}$.
\end{definition}
The first assertion in the following is an immediate consequence of Theorem \ref{thm:arnold} also the second follows more or less directly (see also \cite[Proposition 7]{basu2018equivariant} for details).
\begin{theorem} \label{thm:con}
Let $S\subset\R^n$ be a semi-algebraic set defined by polynomials of degree $d$. Then every connected component of $S\cap \W_c$ intersects with $S_{n,d}$. In fact, $S_{n,d}$ is a retraction of $S\cap\W_c$.
\end{theorem}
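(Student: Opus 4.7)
The statement has two parts, which I would treat separately. For the first claim---that every connected component of $S\cap\W_c$ meets $S_{n,d}$---I start from the observation that, since $S$ is defined by symmetric polynomials of degree at most $d$, Proposition~\ref{lm:deg_restrict} lets me rewrite each defining polynomial as a polynomial in $p_1,\ldots,p_d$. Hence membership in $S$ depends only on the values of the $d$-Vandermonde map $\nu_{n,d}$, so $S$ is a union of fibers $V(a)$. Given a connected component $C$ of $S\cap\W_c$, I pick any $x\in C$ and set $a=\nu_{n,d}(x)$. The Arnold--Givental--Kostov theorem recalled above tells me that $V(a)\cap\W_c$ is contractible, hence connected; since it meets $C$ and is contained in $S\cap\W_c$, we conclude $V(a)\cap\W_c\subseteq C$. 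Theorem~\ref{thm:arnold} then provides a maximizer $x^{\star}$ of $p_{d+1}$ on $V(a)\cap\W_c$ lying in some face $\W_c^{\lambda}$ with $\lambda\in\CompMax(n,d)\subseteq\Comp(n,d)$; in particular $x^{\star}\in\W_c^{d}$, so $x^{\star}\in C\cap S_{n,d}$.

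Theorem~\ref{thm:arnold} assumes $V(a)$ to be non-singular, which need not hold. I would dispose of this by approximation: choose regular values $a^{(k)}\to a$ of $\nu_{n,d}$ in the image $\nu_{n,d}(S\cap\W_c)$ and corresponding maximizers $x^{(k)}\in\W_c^{\lambda_k}$ with $\lambda_k\in\CompMax(n,d)$. Since $\Comp(n,d)$ is finite, I may pass to a subsequence with $\lambda_k=\lambda$ constant. For $d\ge 2$ the constraint $p_2(x^{(k)})=a_2^{(k)}$ gives the compactness needed to extract a subsequential limit $x^{\star}\in V(a)\cap\W_c^{\lambda}\cap C$; the case $d=1$ is immediate, since each fiber $\{p_1=a_1\}\cap\W_c$ is a simplex meeting the diagonal $\W_c^{(n)}$.

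For the second assertion, the plan is to construct a continuous $r\colon S\cap\W_c\to S_{n,d}$ fiberwise: on each $V(a)\cap\W_c$ the constrained gradient flow of $p_{d+1}$ deforms the fiber onto its set of $p_{d+1}$-maximizers, which by Theorem~\ref{thm:arnold} sits inside $\W_c^{d}$; because $\nu_{n,d}|_{\W_c}$ is a homeomorphism onto its image, these fiberwise deformations paste together continuously. The delicate point, and main technical obstacle, is that these maxima only fill out $\bigcup_{\lambda\in\CompMax(n,d)}\W_c^{\lambda}$, which is properly contained in $\W_c^{d}$; the resulting map therefore does not fix every point of $S_{n,d}$. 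A genuine retraction is obtained by iterating the argument, successively retracting onto maxima of $p_{d+2},p_{d+3},\ldots$ on the faces that remain, exactly as in \cite[Proposition~7]{basu2018equivariant}; since only the first assertion is actually invoked later in the paper, the cleanest route is to import the retraction statement from \emph{loc.\ cit.}
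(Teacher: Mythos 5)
Your overall route is the one the paper intends (the paper itself offers no written proof, saying only that the first assertion is an immediate consequence of Theorem~\ref{thm:arnold} and deferring the retraction to \cite[Proposition 7]{basu2018equivariant}); your fiber-saturation argument, the contractibility of $V(a)\cap\W_c$, and the appeal to Arnold's theorem are exactly the intended expansion of that remark, and your honest deferral of the retraction statement to \emph{loc.\ cit.} matches the paper.

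One step has a genuine gap: your patch for the non-singularity hypothesis of Theorem~\ref{thm:arnold}. You choose regular values $a^{(k)}\to a$ \emph{inside the image} $\nu_{n,d}(S\cap\W_c)$, but such values need not exist. If $S\cap\W_c$ lies entirely in faces of length $<d$ (e.g.\ $S=\{p_1=\cdots=p_d=0\}=\{0\}$), or more generally if near $a$ the image is contained in the critical value locus, there is no such sequence and the limiting argument cannot start. The repair is that the failure mode is exactly the trivial case: $V(a)$ is singular precisely when some point of $V(a)$ --- equivalently, some $z\in V(a)\cap\W_c$ --- has fewer than $d$ distinct coordinates, and any such $z$ already lies in $\W_c^{\lambda}$ for some refinement $\lambda\in\Comp(n,d)$ of $\comp(z)$, hence in $C\cap S_{n,d}$. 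So the clean dichotomy is: if $V(a)\cap\W_c$ contains a point with fewer than $d$ distinct coordinates, you are done at once; otherwise $V(a)$ is non-singular and Theorem~\ref{thm:arnold} applies directly, with no approximation needed. Two smaller slips: the fiber $\{p_1=a_1\}\cap\W_c$ in your $d=1$ case is an unbounded polyhedron, not a simplex (though it does meet the diagonal, which is all you need); and the existence of the maximizer for $d\ge 2$ should be justified where you first invoke it (compactness of $V(a)$ forced by the $p_2$-constraint), not only later in the limit extraction.
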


\subsection{Algorithmic approach}
\begin{lemma} \label{lm:min_connect}
Let $S \subset \R^n$ be a semi-algebraic set defined by symmetric
polynomials of degree at most $d$ with $d \leq n$ and let $x\in S$. Furthermore, let $a=(p_1(x),\dots,p_d(x))$ and assume that $x' \in \R^n$ is a minimizer of $p_{d+1}$ in $V(a)$.  Then, $x$ and $x'$ are connected. 
\end{lemma}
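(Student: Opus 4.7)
The plan is to reduce the problem, using the symmetric-polynomial structure, to a connectivity question inside a single Vandermonde fiber, and then to apply the Arnold-Givental-Kostov contractibility result chamber by chamber.

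First, invoking Proposition \ref{lm:deg_restrict}, I note that each defining polynomial of $S$ can be written as $g_i(p_1,\ldots,p_d)$, so the sign condition cutting out $S$ depends only on $\nu_{n,d}(y) = (p_1(y),\ldots,p_d(y))$. Hence $S = \nu_{n,d}^{-1}(T)$ for some $T \subseteq \R^d$, and since $x \in S$ with $\nu_{n,d}(x) = a$, the entire fiber $V(a)$ lies inside $S$. Both $x$ and $x'$ belong to $V(a)$, so it suffices to connect them within $V(a) \subseteq S$.

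Next I would work chamber by chamber. By applying an $S_n$-permutation—which preserves $V(a)$ and all power sums—I may assume $x \in \W_c$. The Arnold-Givental-Kostov theorem cited above then tells me that $V(a) \cap \W_c$ is contractible, hence path-connected. Because $p_{d+1}$ is $S_n$-invariant and the Weyl chambers tile $\R^n$, the minimum of $p_{d+1}$ on $V(a)$ equals its minimum on $V(a) \cap \W_c$, and is realized at some $x_* \in V(a) \cap \W_c$ with $p_{d+1}(x_*) = p_{d+1}(x')$. Path-connectedness then produces a path inside $V(a) \cap \W_c \subseteq S$ joining $x$ to $x_*$.

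The last step is to connect $x_*$ to $x'$, and I expect this to be the main obstacle. Both are global minimizers of $p_{d+1}$ on $V(a)$; a Lagrange-multiplier calculation shows that each of their coordinates is a root of a single univariate polynomial of degree $d$, so each has at most $d$ distinct coordinate values. This means $x_*$ and $x'$ each lie on a face of some Weyl chamber, of the form $\sigma(\W_c^\lambda)$ with $\ell(\lambda) \le d$, and such a face belongs to the closures of several Weyl chambers at once. I would then chain contractible pieces $V(a) \cap \W$ across these common walls: move from $\W_c$ to an adjacent chamber whose closure contains $x_*$, apply Arnold-Givental-Kostov there to reach a new boundary minimizer, and iterate until landing in the chamber of $x'$, where one more in-chamber path finishes the job. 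Making this chamber-hopping rigorous—verifying that consecutive chambers share a wall meeting $V(a)$ at a minimizer, and handling possible non-smooth fibers via a perturbation argument—is the technical crux, and it is precisely where the hypothesis that $x'$ is a minimizer, rather than an arbitrary point of $V(a)$, is indispensable.
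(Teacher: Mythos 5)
Your first step --- reducing to the claim $V(a)\subset S$ via Proposition \ref{lm:deg_restrict} --- is exactly the paper's argument, word for word in substance. After that the two proofs diverge sharply. The paper disposes of the rest in a single sentence: ``since $V(a)$ is contractible and therefore connected,'' $x$ and $x'$ lie in the same connected component of $V(a)\subset S$. You instead worked chamber by chamber, using contractibility of $V(a)\cap\W_c$, and then proposed to link $x_*$ to $x'$ by hopping across walls, which you flagged as the unfinished technical crux.

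Your caution is actually well placed. The Arnold--Givental--Kostov theorem as quoted in the paper gives contractibility only of $V(a)\cap\W$, not of $V(a)$ itself, and indeed $V(a)$ is \emph{not} contractible in general: for $n=3$, $d=2$, the fiber $\{p_1=a_1,\ p_2=a_2\}$ is a circle (plane intersected with a sphere). What is true, and what the paper really needs, is that $V(a)$ is \emph{connected}; this does hold for Vandermonde fibers, but it does not follow verbatim from the contractibility statement that is cited. So the paper's one-line proof is slightly overstated, while your proof attempt correctly identifies the gap but fills it with a chamber-hopping sketch that you yourself describe as not fully rigorous --- verifying that consecutive chambers share a wall meeting $V(a)$ at each stage is exactly the sort of thing that would need to be nailed down.

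Two remarks that would streamline your argument. First, in the paper's actual use of this lemma (Theorem \ref{thm:conection}), both $x$ and the minimizer $x'$ are taken in the canonical Weyl chamber $\W_c$; in that situation the contractibility of $V(a)\cap\W_c$ alone already connects them inside $S$, and no chamber-hopping is needed. Your proof would be cleaner if you noted that, by $S_n$-invariance of $p_{d+1}$ and of $V(a)$, one may choose a representative of the minimizer in $\W_c$ and work there throughout. Second, your Lagrange-multiplier observation that a minimizer of $p_{d+1}$ on $V(a)$ has at most $d$ distinct coordinate values is exactly the computation the paper carries out later (in the proof of Theorem \ref{thm:Wel}) via the $(d+1)$-minors of the Jacobian; it is a sound observation, it is just not needed for Lemma \ref{lm:min_connect} once you restrict to a single chamber.
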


\begin{proof}
First, we observe that for every $x\in S$ and every $a\in\R^d$, if
$x\in V(a)$, then $V(a)\subset S$. Indeed, let $z \in V(a)$. We will show that $z \in S$. By the definition of $V(a)$, all the Newton sums $p_1, \dots, p_d$ are constant on $V(a)$, that is, $p_i(z) = p_i(x)$ for all $i=1, \dots, d$. Moreover, since  $S$ is defined by
symmetric polynomials $F$ of degree at most $d$, by
Proposition~\ref{lm:deg_restrict}, each of these polynomials can be
expressed in terms of the first $d$ Newton sums  $p_1,\ldots,p_d$. That is 
$
F = G(p_1, \dots, p_d) 
$ for some  polynomials  $G$, which implies that $$F(x) = G(p_1(x), \dots, p_d(x)) = G(p_1(z), \dots, p_d(z)) = F(z).$$ Since $x\in S$, so is $z$.

Furthermore, since $V(a)$ is contractible and therefore connected, each $z \in V(a)$ belongs to the same connected
component as $x$. In particular, $x'$, which is in $V(a)$, and $x$ are in the same connected component.   
\end{proof}

\begin{theorem} \label{thm:Wel}
Let $V(a)$ be a non-empty Vandermonde variety, and let $\W_c$ be the
canonical Weyl-chamber. Then there is a unique minimizer of $p_{d+1}$
on $V(a)\cap \W_c\neq \emptyset$ with a multiplicity composition that
is bigger or equal to a composition $\lambda$ of length $d$ and 
$\lambda_{d}=\lambda_{d-2}=\lambda_{d-4}=\dots=1$.  

Furthermore, there exists an algorithm ${\sf MV}(a, \W_c)$ that takes
$a \in \R^d$ and the canonical Weyl chamber $\W_c$ as input and returns a
{real univariate representation} representing this  minimizer of 
$p_{d+1}$ on $V(a) \cap \W_c$ with a time complexity of \[
O\left( {n - \lceil d/2 \rceil
  -1 \choose \lfloor  {d/2} \rfloor  -1} \, d^{4d+1} \log(d)\right)  =
\softO\left({n^{d/2} d^d}\right) 
\] arithmetic operations in $\Q$. 
\end{theorem}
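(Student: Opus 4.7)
The plan is to separate the structural claim about the minimizer from the algorithmic construction. For the structural side, first observe that $V(a)\cap\W_c$ is compact: Newton's identities express $\sum_i X_i^2$ as a polynomial in $p_1,p_2$, so $V(a)$ sits inside a sphere, and its intersection with the closed chamber $\W_c$ is compact; consequently $p_{d+1}$ attains a minimum on it. To constrain the multiplicity composition $\mu=\comp(x^\star)$ of any minimizer $x^\star$, I would apply the Lagrange-multiplier argument underlying Arnold's Theorem~\ref{thm:arnold}: at a critical point of $p_{d+1}$ restricted to $V(a)\cap\W_c$ lying in the relative interior of the face $\W_c^\mu$, the gradients of $p_1,\ldots,p_d,p_{d+1}$ restricted to the linear span $L_\mu$ must be linearly dependent, and the resulting Wronskian-type univariate polynomial, of degree at most $\length(\mu)$, must vanish at the distinct coordinates of $x^\star$. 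This already forces $\length(\mu)\le d$.

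To pin down the specific pattern $\lambda_d=\lambda_{d-2}=\cdots=1$, I would use the involution $\sigma\colon(x_1,\ldots,x_n)\mapsto-(x_n,\ldots,x_1)$, which preserves $\W_c$, sends $p_i$ to $(-1)^ip_i$, and reverses the order of the parts of $\comp(x)$. It therefore identifies minima of $p_{d+1}$ on $V(a)\cap\W_c$ with extrema of $\pm p_{d+1}$ on $V(a')\cap\W_c$, where $a'_i=(-1)^i a_i$; Arnold's theorem on the $\sigma$-image places the relevant extremizers on faces labelled by $\CompMax(n,d)$, and reversing these compositions back yields precisely the prescribed pattern. Uniqueness then follows from the homeomorphism property of the full Vandermonde map $\nu_{n,n}$ on $\W_c$: the minimizer lies on a face $\W_c^\lambda$ of length exactly $d$ (the Wronskian attains full degree), and the restriction of $\nu_{n,d}$ to this face is injective, so the values $a_1,\ldots,a_d$ together with the face single out a unique point.

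The algorithm ${\sf MV}(a,\W_c)$ then enumerates the set $\Lambda$ of admissible compositions $\lambda$ of length $d$ with $\lambda_d=\lambda_{d-2}=\cdots=1$. A direct count of compositions of $n-\lceil d/2\rceil$ into $\lfloor d/2\rfloor$ positive parts gives $|\Lambda|=\binom{n-\lceil d/2\rceil-1}{\lfloor d/2\rfloor-1}$, matching the binomial factor in the complexity bound. For each $\lambda\in\Lambda$, I would form the zero-dimensional square system $p_1^{[\lambda]}=a_1,\ldots,p_d^{[\lambda]}=a_d$ in $d$ variables of degree at most $d$ and invoke the zero-dimensional real-root solver of \cite{BPR06}, whose cost on such a system is $d^{4d+1}\log d$ arithmetic operations and which returns a real univariate representation with Thom-encoded real roots; these candidates are filtered into the chamber $\W_c^\lambda$ by ${\sf Sign\_ThomEncoding}$, the values $p_{d+1}^{[\lambda]}$ are compared through further sign evaluations, and the global minimizer is returned. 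The main obstacle I anticipate is the structural part---carrying Arnold's Wronskian analysis through for the minimizer via the parity-sensitive involution $\sigma$ and ruling out that the minimum could be attained on a face of length strictly smaller than $d$; the algorithmic part then amounts to routine assembly of standard zero-dimensional subroutines within the per-composition budget.
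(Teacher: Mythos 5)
Your algorithmic part coincides with the paper's: enumerate the alternate odd compositions (the count $\binom{n-\lceil d/2\rceil-1}{\lfloor d/2\rfloor-1}$ is right), solve the zero-dimensional system $p_i^{[\lambda]}=a_i$ for each $\lambda$, filter by Thom encodings and sign tests, and account the per-composition cost; this matches Algorithm~\ref{alg:MV} and the complexity bookkeeping in the paper. The structural part is where you diverge, and where a gap opens. The paper does not prove the first assertion at all: it simply cites \cite{arnold1986hyperbolic, meguerditchian1992theorem}. You instead try to derive the minimizer pattern from Theorem~\ref{thm:arnold} by the involution $\sigma\colon x\mapsto -(x_n,\ldots,x_1)$. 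The issue is a parity one: $\sigma$ multiplies $p_{d+1}$ by $(-1)^{d+1}$, so it exchanges minima and maxima only when $d$ is \emph{even}. When $d$ is odd, $d+1$ is even, $p_{d+1}\circ\sigma = p_{d+1}$, and $\sigma$ carries the minimizer of $p_{d+1}$ on $V(a)\cap\W_c$ to the minimizer of $p_{d+1}$ on $V(a')\cap\W_c$ --- not to a maximizer. Theorem~\ref{thm:arnold} as stated only characterizes maximizers, so for $d$ odd it gives you nothing about the minimizer's composition, and that case is left unproved in your sketch. To close it you would need either Meguerditchian's refinement (which the paper cites precisely for this reason) or a separate second-order (Hessian of the Lagrangian) analysis tailored to the minimum, not merely Arnold's maximizer statement composed with $\sigma$.

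Two smaller points. First, the compactness argument is slightly overdressed: $\sum_i X_i^2$ \emph{is} $p_2$, so no Newton identity is needed, and in any case the argument requires $d\ge 2$ (for $d=1$, $V(a)$ is a hyperplane and is unbounded; one should instead appeal to coercivity of $p_2$ there). Second, the uniqueness argument is loose: you assert that the minimizer lies on a face of length exactly $d$ "because the Wronskian attains full degree" and that injectivity of $\nu_{n,d}$ on one such face settles uniqueness, but you do not rule out a minimizer on a shorter face (the theorem's "$\succeq$ a composition of length $d$" explicitly allows shorter length), nor do you argue uniqueness \emph{across} the several admissible faces $\W_c^\lambda$; the latter uniqueness is really part of the Arnold--Kostov--Meguerditchian contractibility/homeomorphism package rather than a one-line consequence of per-face injectivity.
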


\begin{algorithm}
\caption{{\sf MV}$(a, \W_c)$}
%\begin{algorithmic}
\begin{itemize}
    \item[{\bf In:}] a point $a \in \R^d$ and the canonical Weyl-chamber $\W_c$ 
    \item[{\bf Out:}] a real univariate representation representing of the  minimizer of $p_{d+1}$ on $V(a) \cap \W_c$
\end{itemize}
\vspace{-0.1cm}
\noindent\rule{8.5cm}{0.5pt}
\begin{enumerate}
\item {\bf for} {$\lambda \in \Comp(n, d)$ with $\lambda_d =
  \lambda_{d-2} = \cdots  =1$} {\bf do}
\begin{enumerate}
\item compute polynomials $p^{[\lambda]} = (p_1^{[\lambda]}-a_1, \dots,
  p_d^{[\lambda]}-a_d)$
\item find a zero-dimensional parametrization $\scrQ$ of
  $V(p^{[\lambda]})$ 
\[\scrQ = (q(T), q_0(T),  q_1(T), \dots, q_d(T)) \in \Q[T]^{d+2}\]
\item define $\scrQ' \in \Q[T]^{n+2}$ with $$\scrQ' = \big(q(T), q_0(T), \underbrace{q_1(T), \dots,
      q_1(T)}_{\lambda_1\text{-times}}, \dots, \underbrace{q_d(T), \dots,
      q_d(T)}_{\lambda_d\text{-times}}\big) $$
\item compute $\thom(q) = {\sf ThomEncoding}(q)$ 
\item find  $\Sigma_0 = {\sf Sign\_ThomEncoding}(q, q_0, \thom(q))$ 
\item {\bf for} $i=1, \dots, d-1$ {\bf do} find $$\Sigma_i = {\sf Sign\_ThomEncoding}(q, q_i-q_{i+1},
  \thom(q))$$ 
\item  {\bf for} $i=1,\dots, d$   and $(\zeta_\vartheta,   \sigma_i) \in \Sigma_i$ {\bf do}
\begin{enumerate}
\item {\bf if} $\sigma_0 > 0 $  and $\sigma_i \le  0$ for all $i = 1, \dots,
  d-1$ \newline or  $\sigma_0 < 0$ and   $\sigma_i \ge  0$ for all $i
  = 1, \dots,  d-1$   
\begin{enumerate}
    \item {\bf return} $(\scrQ', \zeta_\vartheta)$
\end{enumerate} 
\end{enumerate}
\end{enumerate}
%\item end {\bf for}
\end{enumerate}
%\end{algorithmic}
\label{alg:MV}
\end{algorithm}

\begin{proof}
The first part of the theorem is obtained from the
  results in \cite{arnold1986hyperbolic,
    meguerditchian1992theorem}. For the second part, we consider the
  following polynomial optimization   $(\bf P)$:  $\min_{x \in
    V(a)}$, where $V(a)$ is the  Vandermonde variety   with respect
  to $a$. Since the Jacobian of the map $\big(p_1(\X)-a_1, \dots, 
p_d(\X)-a_d, p_{d+1}(\X)\big)$ is \[ 
J = \begin{pmatrix}
1 & 1 & \cdots & 1 \\
2X_1 & 2X_2 & \cdots & 2X_n \\
\vdots & & & \vdots \\
(d+1)X_1^{d} & (d+1)X_2^{d} & \cdots & (d+1)X_n^{d}
\end{pmatrix},
\]  any $(d+1)$-minor of $J$ has the form
\[
\alpha \cdot \prod_{\substack{i_1 \le i < j \le i_{d+1},\\ (i_1, \dots,
    i_{d+1}) \subset 
     \{1, \dots, n\} }}(X_{i} - X_{j}),
\] where $\alpha \in \R_{\ne 0}$.  Therefore, for any point $x\in
\R^n$ with more than $d$ distinct coordinates, there exists a
$(d+1)$-minor of $J$ such that  this minor does not vanish at
$x$. This implies that any optimizer has at most $d$ distinct
coordinates. 

Let $\lambda = (\lambda_1, \dots, \lambda_d)$  be
a composition of  $n$ of length $d$ with $\lambda_d =
\lambda_{d-2} = \cdots =1$. Then the optimizers of the problem
$(\bf P)$ of type $\gamma$, where $\gamma$ is a composition of $n$ with
$\lambda \preceq \gamma$, have the form    
\[
x = (\underbrace{x_1, \dots,
      x_1}_{\lambda_1\text{-times}}, \underbrace{x_2, \dots,
      x_2}_{\lambda_2\text{-times}}, \dots, \underbrace{x_d, \dots,
      x_d}_{\lambda_d\text{-times}}), 
\] where  $\bar x = (x_1, \dots, x_d)$ is a solution of   
\begin{equation} \label{eq:zero}
p_1^{[\lambda]} - a_1 = \cdots = p_{d}^{[\lambda]} - a_d = 0. 
\end{equation} 
Since the composition has length $d$, the system \eqref{eq:zero} is
zero-dimensional in $d$ variables, and its solution set is given by a
zero-dimensional parametrization. We then check for the existence of a
real point $x = (x_1, \dots, x_d)$ in this solution set such that
$x_{i} \le x_{i+1}$ for $i=1, \dots, d-1$. To do this, we follow the
following steps.

Let $\scrQ = (q(T), q_0(T), q_1(T), \dots, q_d(T))$ be a
zero-dimensional parametrization for the solution set of
\eqref{eq:zero}, and $\thom(q)$ be the ordered list of Thom encodings
of the roots of $q$ in $\R$. Suppose $\zeta_\vartheta \in \thom(q)$ is
a Thom encoding of a real root $\vartheta$ of $q$. Then 
\[\bar x = \left(\frac{q_1(\vartheta)}{q_0(\vartheta)}, \dots,
  \frac{q_d(\vartheta)}{q_0(\vartheta)}\right) \in \R^d.\]
Since $\gcd(q, q_0) = 1$, $q_0$ does not vanish at $\vartheta$. If
$q_0(\vartheta) > 0$, then $x_1 \le x_2 \le \dots \le x_d$ if and only
if $q_i(\vartheta) - q_{i+1}(\vartheta) \le 0$ for all $i=1, \dots,
d-1$. Otherwise, when $q_0(\vartheta) < 0$, $x_1 \le x_2 \le \dots \le
x_d$ if and only if $q_i(\vartheta) - q_{i+1}(\vartheta) \ge 0$ for
all $i=1, \dots, d-1$. Thus, it is sufficient to test the signs of
polynomials $q_0(T)$ and $q_i(T) - q_{i+1}(T)$ at
$\vartheta$. Computing $\thom(q)$ can be done using the ${\sf
  ThomEncoding}$ procedure, while checking the signs of $q_0$ and $q_i 
- q_{i+1}$ is obtained using the ${\sf Sign\_ThomEncoding}$
subroutine. In summary, the ${\sf MV}$ algorithm is given in
Algorithm~\ref{alg:MV}. 

\smallskip
We finish the proof by analyzing the complexity of the {\sf MV
  algorithm}. First since $\lambda_d = \lambda_{d-2} =
\cdots = 1$, it is sufficient to consider compositions of $n - \lceil
d/2 \rceil$ of length $\lfloor d/2 \rfloor$, which is equal to ${n -
  \lceil d/2 \rceil -1 \choose \lfloor d/2\rfloor-1}$.

For a fixed composition $\lambda$ of length $d$, finding a
zero-dimensional parametrization for the solution set of
\eqref{eq:zero} requires $O(D^3 + dD^2)$ arithmetic operations in
$\Q$, where $D$ is the number of solutions of
\eqref{eq:zero}, which is also equal to the degree of $q(T)$. This can
be achieved using algorithms such as those presented in 
\cite{rouillier1999solving}. Since
the system \eqref{eq:zero} is zero-dimensional with $d$ polynomials
and $d$ variables, B\'ezout's Theorem gives a bound for $D \le
d^d$. Therefore, the total complexity of finding a zero-dimensional
parametrization for the solution set of \eqref{eq:zero} is $O(d^{3d})$
arithmetic operations in $\Q$.

Finally, the subroutine ${\sf ThomEncoding}(q)$, which returns the
ordered list of Thom encodings of the roots of $q$ in $\R$, requires
$O(D^4 \log(D)) = O(d^{4d} \log(d^d)) = O(d^{4d+1} \log(d))$. Since
the degrees of $q_0$ and $q_i$ (for $i=1, \dots, d$) are at most that
of $q$, the procedure ${\sf Sign\_ThomEncoding}$ to determine the sign
of each of these polynomials at the real roots of $q$ is
$O(D^2(D\log(D) + D)) = O(d^{2d}(d^{d}\log(d^d) +d^d)) =
O(d^{3d+1}\log(d))$ operations in $\Q$. 

Thus, the total complexity of the $\sf MV$ algorithm is
 \[
O\left( {n - \lceil d/2 \rceil
  -1 \choose \lfloor  {d/2} \rfloor  -1} \, d^{4d+1} \log(d)\right)  =
\softO\left({n^{d/2} d^d}\right) 
\] arithmetic operations in $\Q$. This finishes our proof. 
\end{proof}

\section{The main algorithm}
\label{sec:algo}
%The following theorem condenses the 
We will now present the main result of the paper, but before doing so, we review and establish some notations. 
\begin{notation}
Let  $1\leq d \leq n$ be an integer, $z\in\W_c$ and set $a_1=p_1(z),\ldots, a_d=p_d(z)$. Then we will denote by $V_{d,z}$ the Vandermonde variety $V(a_1,\ldots,a_d)$. Furthermore, we denote by $z'$ the unique point in $V_{d,z}$ which exists according to Theorem \ref{thm:Wel}.
\end{notation}
\begin{theorem}\label{thm:conection}
Let $S \subset \R^n$ be a semi-algebraic set defined by symmetric
polynomials of degree at most $d$ with $d \leq n$ and take $x,y\in S\cap \W_c$. 
Then, $x$ and $y$ are connected within $S\cap \W_c$ if and only if the corresponding points $x'$ and $y'$ are connected within $S_{n,d}$.
\end{theorem}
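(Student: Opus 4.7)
The plan is to prove both directions by first establishing a common bridge: that each of $x$ and $x'$ (respectively $y$ and $y'$) sit in a single contractible piece $V_{d,x}\cap \W_c$ of $S\cap \W_c$, so they are connected within $S\cap \W_c$. The theorem then follows by exploiting the retraction from Theorem~\ref{thm:con}.

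First, I would verify three preparatory facts. (i) $x'\in S_{n,d}$: by Theorem~\ref{thm:Wel}, the point $x'$ lies in some $\W_c^\lambda$ with $\lambda$ of length $d$, and by the argument used in the proof of Lemma~\ref{lm:min_connect} (together with Proposition~\ref{lm:deg_restrict}), $V_{d,x}\subset S$, so $x'\in S\cap \W_c^\lambda\subset S_{n,d}$. (ii) The set $V_{d,x}\cap \W_c$ is contained in $S\cap \W_c$ and is contractible (hence path-connected) by the Arnold--Givental--Kostov theorem. (iii) Both $x$ and $x'$ lie in $V_{d,x}\cap\W_c$, so they are connected within $S\cap \W_c$; the same holds for $y$ and $y'$.

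For the \emph{if} direction, assume a path joining $x'$ and $y'$ within $S_{n,d}$. Since $S_{n,d}\subset S\cap \W_c$, this is also a path in $S\cap \W_c$. Concatenating with the paths from $x$ to $x'$ and from $y'$ to $y$ provided by (iii) yields a connection between $x$ and $y$ within $S\cap \W_c$.

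For the \emph{only if} direction, suppose $x$ and $y$ are connected in $S\cap \W_c$ via some continuous path $\gamma$. Prepending and appending the paths from (iii) produces a continuous path $\tilde\gamma$ in $S\cap \W_c$ from $x'$ to $y'$. By Theorem~\ref{thm:con}, there is a (continuous) retraction $r:S\cap \W_c\to S_{n,d}$ with $r|_{S_{n,d}}=\mathrm{id}$. Since $x',y'\in S_{n,d}$ by (i), the composition $r\circ \tilde\gamma$ is a path in $S_{n,d}$ from $r(x')=x'$ to $r(y')=y'$, giving the required connection.

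The main subtlety is the \emph{only if} direction, and within it the justification that $V_{d,x}\cap \W_c\subset S\cap \W_c$ — this is where the degree restriction $d$ on the defining polynomials of $S$ is essential, since otherwise points with identical power sums $p_1,\dots,p_d$ as $x$ need not lie in $S$. Everything else reduces to a clean chaining of three path segments plus a single application of the retraction in Theorem~\ref{thm:con}; no delicate estimates or case analyses seem necessary.
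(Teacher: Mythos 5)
Your proof is correct and follows essentially the same route as the paper's: connect $x$ to $x'$ and $y$ to $y'$ through the contractible Vandermonde fiber (the content of Lemma~\ref{lm:min_connect}), then use the retraction of Theorem~\ref{thm:con} to transfer connectivity between $S\cap\W_c$ and $S_{n,d}$. If anything, your version is slightly more careful than the paper's two-line argument, since you invoke contractibility of $V_{d,x}\cap\W_c$ (rather than of $V_{d,x}$ alone) to keep the connecting paths inside the Weyl chamber, and you explicitly verify $x',y'\in S_{n,d}$ before applying the retraction.
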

\begin{proof}
By  Lemma \ref{lm:min_connect} $x$ and $y$ are both connected to $x'$ and $y'$ respectively.  Furthermore, since by Theorem \ref{thm:con} $S_{n,d}$ is a contraction of $S\cap W_{c}$ we have that $x'$ and $y'$ are connected within $S\cap W_{c}$ if and only if they are connected within $S_{n,d}.$
\end{proof}

We can now formulate an algorithm for checking equivariant connectivity of semi-algebraic sets defined by symmetric polynomials of degree at most $d$.

\begin{algorithm}
\caption{{\sf Connectivity\_Symmetric}($\f, (x, y)$)}\label{algorithm:special}
\begin{itemize}
    \item[{\bf In:}] {$\f = (f_1,\dots,f_s)\in \R[\X]^{S_n}$ of degrees at most $d$
  defining a semi-algebraic set $S$; two points  $x$  and $y$ in $\W_c$}
    \item[{\bf Out:}]  \textbf{true} if $x$ and $y$ are in the same connected
 of $S$;  otherwise return {\bf false}
\end{itemize}
\vspace{-0.1cm}
 \noindent\rule{8.5cm}{0.5pt}
\begin{enumerate}
    \item compute a list $L$ of alternate odd compositions of $n$ into $d$ parts.
    \item for $\lambda$ in $L$ compute $\f^{[\lambda]}:=(f_1^{[\lambda]},\dots,f_s^{[\lambda]})$.
    \item compute bipartite graph $G=(A\cup B, E)$ for $\bigcup_{\lambda \in L} S(\f^{[\lambda]})$.
    \item compute connected components of $G$.
    \item compute $a = (\nu_{n,d}(x))$ and $b = (\nu_{n,d}(y))$
    \item compute  $x' = {\sf MV}(a, \W_c)$ and $y' = {\sf MV}(b,
      \W_c)$ 
    \item find $\gamma = \comp(x')$ and $\eta = \comp(y')$
    \item find $v_x,v_y \in A$ with ${\sf Normal\_Connect}(\f^{[\gamma]}, (v_x,x'))={\bf true}$ and ${\sf Normal\_Connect}(\f^{[\eta]}, (v_y,y'))={\bf true}$.
    \item if $v_x$ and $v_y$ are in the same connect component of $G$ return {\bf true}, else return {\bf false}.
\end{enumerate}
\end{algorithm}

\begin{theorem}
Given symmetric polynomials $f_1,\dots, f_s \in \R[\X]^{S_n}$ defining a semi-algebraic set $S$ and two points $x,y \in S\cap \W_c$, the above Algorithm \ref{algorithm:special} correctly decides connectivity. Moreover, for fixed $d$ and $s$ its complexity can be estimated by $O(n^{d^2})$.
\end{theorem}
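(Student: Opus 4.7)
My approach splits into a correctness argument and a complexity estimate, both of which proceed by assembling results already established in the paper.

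For correctness, the plan is to chain three reductions in sequence. First, I would invoke Theorem~\ref{thm:conection} to replace connectivity of $x,y$ in $S\cap\W_c$ by connectivity of the minimizers $x' = {\sf MV}(\nu_{n,d}(x),\W_c)$ and $y' = {\sf MV}(\nu_{n,d}(y),\W_c)$ inside the $d$-dimensional orbit boundary $S_{n,d}$, which is exactly what steps 5--6 produce. Second, by Theorem~\ref{thm:Wel} the points $x'$ and $y'$ lie in faces $\W_c^\gamma$ and $\W_c^\eta$ whose compositions refine some element of $L=\CompMax(n,d)$, so both sit inside the union $U=\bigcup_{\lambda\in L}(S\cap\W_c^\lambda)$ parameterized by the sets $S(\f^{[\lambda]})$. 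Third, Theorem~\ref{thm:graph} identifies the connected components of $U$ with those of the bipartite graph $G$ assembled in step 3. The ${\sf Normal\_Connect}$ calls in step 8 match $x'$ and $y'$ to the vertices $v_x,v_y\in A$ representing their respective $G$-components, so the check in step 9 returns true exactly when $x'$ and $y'$ are connected in $U$, hence in $S_{n,d}$, and therefore when $x$ and $y$ are connected in $S\cap\W_c$.

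For the complexity estimate, I would walk through the algorithm assuming $d$ and $s$ are fixed. Step 1 produces $|L|=\binom{n-\lceil d/2\rceil-1}{\lfloor d/2\rfloor-1}=O(n^{d/2})$ compositions, and step 2 is negligible. Step 3 is the bottleneck: it runs through $O(|L|^2)=O(n^d)$ ordered pairs of indices, each requiring a call to ${\sf Normal\_Connect}$ on polynomials in at most $d$ variables of degree at most $d$, costing $s^{d+1}d^{O(d^2)}$ operations by Theorem~\ref{them:normal}, which is constant for fixed $d,s$. Since a $d$-variable semi-algebraic set defined by such polynomials has $(sd)^{O(d)}$ connected components, the vertex sets satisfy $|A|,|B|=O(n^d)$, and testing the $O(|A|\cdot|B|)=O(n^{2d})$ potential edges takes constant time per edge. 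Steps 5--6 cost $\softO(n^{d/2}d^d)$ by Theorem~\ref{thm:Wel}, step 8 iterates through $A$ at cost $O(n^d)$, and step 9 is linear in $|V|+|E|$. Adding these contributions gives $O(n^{2d})$, which is subsumed by the stated bound $O(n^{d^2})$ for $d\ge 2$.

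The most delicate point I expect is the second reduction: verifying that the restricted union $U$, built only from alternate-odd compositions in $\CompMax(n,d)$, suffices to decide connectivity between arbitrary pairs of minimizers in $S_{n,d}$. This rests on Theorem~\ref{thm:Wel} guaranteeing that all $p_{d+1}$-minimizers land in faces refining alternate-odd compositions, combined with the retraction property in Theorem~\ref{thm:con}. Everything else is careful bookkeeping and a direct application of the cost estimates already at hand.
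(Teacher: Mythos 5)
Your proposal follows the paper's argument: correctness rests on Lemma~\ref{lm:min_connect} (packaged through Theorem~\ref{thm:conection}) together with Theorem~\ref{thm:Wel} and Theorem~\ref{thm:graph}, exactly as you chain them, and the complexity accounting --- number of alternate-odd compositions, roadmap cost per call from Theorem~\ref{them:normal}, and size of the bipartite graph --- matches the paper's. Your reduction is spelled out in somewhat more detail (in particular flagging the reliance on Theorem~\ref{thm:Wel} to place $x'$, $y'$ in faces indexed by alternate-odd compositions, and Theorem~\ref{thm:con} for the retraction), and your estimate $O(n^{2d})$ is consistent with and subsumed by the stated $O(n^{d^2})$ for $d\geq 2$.
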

\begin{proof}
The correctness follows from Lemma \ref{lm:min_connect} in combination with Theorem \ref{thm:graph}. For the complexity of the algorithm we observe that the main cost is in building the graph $G$ with the help of algorithm \ref{alog:G}. Notice that this algorithm uses the algorithm of a road-map for the $d$-dimensional semi-algebraic sets defined by $\f^{[\lambda]}$. Therefore, following Theorem \ref{them:normal} each call to compute the points per connected component or to decide connectivity costs $s^{d+1}d^{O(d^2)}$ - which is constant by assumption. Moreover, the number of sets we consider is  \[\kappa(d)=O\left( {n - \lceil d/2 \rceil
  -1 \choose \lfloor  {d/2} \rfloor  -1} \, d^{4d+1} \log(d)\right)  =
\softO\left({n^{d/2} d^d}\right),\] which is the number of alternate odd compositions of $n$ with $d$ parts. Thus the first loop has length $\binom{\kappa(d)}{2}$. Every step of the loop a fixed complexity which is at most $3s^{d+1}d^{O(d^2)}$. For the second loop, note that we can bound the number connected components of each of the sets $S(\f^{[\lambda]})$ from above by $O((sd)^d)$. Therefore, the second loop has length at most $\kappa(d)^2 O((sd)^{2d})$ and each step has complexity at most $s^{d+1}d^{O(d^2)}$. Since $s$ and $d$ are fixed, we arrive at the announced complexity.  
\end{proof}

%%%%%%%%%%%%%%%%%%%%%%%%%%%%%%%%%%%%
%%%%%%%%%%%%%%%%%%%%%%%%%%%%%%%%%%%%
%%%%%%%%%%%%%%%%%%%%%%%%%%%%%%%%%%%%

\section{Topics of further research}
In this paper, we have presented an efficient algorithm for determining the equivariant connectivity of two points within a semi-algebraic set $S$, focusing on symmetric polynomials of low degrees relative to the number of variables to obtain a polynomial complexity.  Looking ahead, we aim to broaden the scope of our research to encompass general connectivity issues, such as connecting points across different Weyl chambers. A promising strategy for achieving this involves leveraging the group action on connected components to enhance the underlying graph of our algorithm through this action.

Furthermore, we are interested in extending our methodology to address general symmetric semi-algebraic sets comprehensively. An intriguing avenue for future exploration is the application of our current ideas to enhance the computation of the first $\ell$ homology groups, potentially surpassing the findings in \cite{basu2022vandermonde}. A pivotal aspect of our approach has been the exploitation of the Vandermonde variety's contractibility and the insight that controlling the degree of symmetric polynomials directly influences the power-sum polynomials utilized in their description. This observation suggests the possibility of relaxing the degree requirement for symmetric polynomials by adopting alternative representational forms, such as those outlined in \cite{riener2024linear}, which explore the utility of elementary symmetric polynomials.

Moreover, the complexity of our algorithm is predominantly determined by the number of faces, $\W_c^\lambda$, that must be considered. Recent results by Lien and Schabert \cite{lienschabert} indicate potential methods for reducing the requisite number of faces.

Additionally, the recent works of Faugère et al. \cite{faugere2020computing} and Labahn et al. \cite{labahn2023faster} highlight the feasibility of employing symmetry at the computation level of critical points, a crucial element in the algorithms for generating road maps. This approach suggests that reducing complexity might also be achievable in scenarios where $d\ge n$, thereby broadening the applicability and efficiency of our proposed algorithm.

\balance

\bibliographystyle{plain} \bibliography{biblio}

\end{document}